\date{}
\title{\bf{Record-based transmuted log-logistic distribution: Properties, simulation, and applications to petroleum rock and reactor pump data}}
\author{Caner Tan{\i}\c{s}}\affil{Department of Statistics, \c{C}ank{\i}r{\i} Karatekin University, \c{C}ank{\i}r{\i}, Turkey; canertanis@karatekin.edu.tr}
\newtheorem{theorem}{Theorem}
\begin{document}
\maketitle
\begin{abstract}
This study aims to introduce a new lifetime distribution, called the record-based transformed log-logistic distribution, to the literature. We obtain this distribution using a record-based transformation map based on the distributions of upper record values. We explore some mathematical properties of the suggested distribution, namely the quantile function, hazard function, moments, order statistics, and stochastic ordering. We discuss the point estimation via seven different methods such as maximum likelihood, least squares, weighted least squares, Anderson-Darling, Cramer-von Mises, maximum product spacings, and right tail Anderson Darling. Then, we perform a Monte Carlo simulation study to evaluate the performances of these estimators. Also, we present two practical data examples, reactor pump failure and petroleum rock data to compare the fits of the proposed distribution with its rivals. As a result of data analysis, we conclude that the best-fitted distribution is the record-based transmuted log-logistic distribution for reactor pump failure and petroleum rock data sets.
\end{abstract}
\textbf{Keywords:} Log-logistic distribution, Record-based transmuted log-logistic distribution, Upper record values, Record-based transmutation map, Monte Carlo simulation

\noindent\textbf{MSC 2020:} 62E10, 62F10, 62P99

\section{Introduction}
Lifetime distributions are widely used in data analysis and modeling for future prediction and statistical inference in many fields. Many popular life-time distributions such as Weibull, gamma, normal, and exponential are used in modeling data obtained in many fields such as engineering, agriculture, biology, chemistry, medicine, economics, and social sciences. Considering both the increasing variety of data and the speed of the current data flow, current lifetime distributions are not sufficient for real-world data modeling and other statistical inferences. In this case, many researchers have aimed to introduce new lifetime distributions to the literature with some modifications based on existing distributions. For instance, the exponential distribution is known to be able to model the lifetime of a mechanical part. If a more flexible distribution that rivals the exponential distribution, that is, a distribution that has the potential to model data with more hazard rate shapes, is introduced to the literature, it will be seen that it is superior to the exponential distribution in modeling this type of data. New methods have been proposed in the literature for years to obtain new lifetime distributions. One of these methods is the transmutation map proposed by X. This method is based on the distribution of the first two order statistics.  The transmutation map is summarized below.

Let $X_1$ and $X_2$ be independent and identically distributed random variables with cumulative distribution function (CDF) $G\left(.\right)$ and probability distribution function (PDF) $g\left(.\right)$, and $X_{1:n} $, $X_{2:n} $ be the first two order statistics corresponding to the sample.

 Let us define random variable $U$ by 
 \begin{equation*}
 \begin{array}{l}
 U\overset{d}{=}X_{1:2},\mbox{ with probability }p,  \\ 
 U\overset{d}{=}X_{2:2},\mbox{ with probability }1-p ,%
 \end{array}%
 \end{equation*}
 and corresponding CDF is
\begin{eqnarray}
\label{transmuted: cdf}
F_{U}\left( x\right)  &=&p P\left( {X_{1:2}\leq x}\right) +\left( {1-p }%
\right) P\left( {X_{2:2}\leq x}\right)   \notag \\
&=&p \left( {1-\left( {1-\left( {G\left( x\right) }\right) ^{2}}\right) }%
\right) +\left( {1-p }\right) G^{2}\left( x\right)   \notag \\
&=&2p G\left( x\right) +\left( {1-2p }\right) G^{2}\left( x\right) .
\end{eqnarray}
 where $p \in (0,1) $

Substituting $p =\frac{1+\lambda }{2}$ in Eq. (\ref{transmuted: cdf}), the CDF and
corresponding the PDF are 

\begin{equation}
\label{eq2}
F\left(x\right)=\left(1+\lambda\right)G\left(x\right)-\lambda \left(G\left(x\right)\right)^2,
\end{equation}
and
\begin{equation}
\label{eq3}
f\left(x\right)=\left(1+\lambda\right)g\left(x\right)-2\lambda G\left(x\right)g\left(x\right),
\end{equation}
respectively, where $\lambda \in \left[-1,1\right]$.
The transmutation method, and specially the rank transmutation map, allows for the systematic alteration of a baseline distribution to generate more flexible families while retaining analytical tractability. Many authors have explored this idea to propose transmuted versions of well-known distributions, leading to a rich literature that demonstrates their superior performance in modeling lifetime, reliability, and biological data \cite{granzotto2015transmuted}. For example, transmuted distributions such as the Transmuted log-logistic \cite{aryal2013transmuted}, Transmuted half logistic \cite{samuel2019study}, and transmuted logistic exponential \cite{adesegun2023transmuted} have been shown to outperform their baseline distributions in goodness-of-fit and inferential flexibility. Also, \cite{louzada2016transmuted} suggested transmuted log-logistic regression model.

Similar to the family of distributions based on the distribution of the first two order statistics defined in Eq. (\ref{transmuted: cdf}), \cite{balakrishnan2021record} proposed a novel family of distributions based on the first two upper record statistics called record-based transmutation map (RBTM). The RBTM is given as follows:

Let $X_1 $ and $X_2 $  be a random sample with two sizes from the distribution with the CDF $G(.)$ and PDF $g(.)$, and $X_{U\left( 1 \right)} $ and $X_{U\left( 2 \right)} $ 
be upper records associated with the sample.
 
 Let us define a random variable $Y$ 
\[
\begin{array}{l}
 Y\overset{d}{=} \mbox{ }X_{U\left( 1 \right)} ,\mbox{ 
with probability }p _1, \\ 
 Y\overset{d}{=} \mbox{ }X_{U\left( 2 \right)} ,\mbox{ 
with probability }p _2, \\ 
 \end{array}
\]
where $U_{\left( n\right) }=\min \left\{ i:i>U\left( n-1\right)
,X_{i}>X_{U\left( n-1\right) }\right\} \left\{ U_{\left( n\right) }\right\}
_{n=1}^{\infty }$ denotes upper record times and $\left\{ X_{U\left(
n\right) }\right\} _{n=1}^{\infty }$ denotes the corresponding record sequence \cite{balakrishnan2021record},\citet{arnold2008first}, $p_{1}+p_{2}=1$ and $Y$ refers to a random variable having  record-based transmuted distribution. In this regard, the CDF of $Y$ is 
\begin{eqnarray}
F_{Y}\left( x\right)  &=&p_{1}P\left( {X_{U\left( 1\right) }\leq x}\right)
+p_{2}P\left( {X_{U\left( 2\right) }\leq x}\right)   \nonumber \\
&=&G\left( x\right) +p\left[ {\left( {1-G\left( x\right) }\right) \log
\left( {1-G\left( x\right) }\right) }\right], 
\end{eqnarray}
where $p\in \left( {0,1} \right)$. The corresponding PDF is  
 
\begin{equation}
\label{eq9}
f_Y \left( x \right)=g\left( x \right)\left[ {1+p\left( {-\log \left( 
{1-G\left( x \right)} \right)-1} \right)} \right].
\end{equation}

The concept of \emph{record-based} or \emph{lower record type} transmuted distributions has recently emerged, allowing the authors to generate new lifetime distributions by means of the distributions of upper and lower record values via the RBTM \cite{tanis2021transmuted, tanis2022record}. In recent years, various special cases of this family of distribution have been proposed via the RBTM such as Weibull \cite{tanics2022record}, power Lomax \cite{sakthivel2022record}, Lindley \cite{tanics2024new}, generalized linear exponential \cite{arshad2024record}, unit Omega \cite{pathak2024record}, Burr X \cite{alrweili2025statistical}.

The aim of this study is to propose a more flexible distribution for data modeling than the log-logistic distribution by adopting the log-logistic distribution as the base distribution. Thus, there will be a viable alternative for data modeling where the log-logistic distribution is not sufficient. The distribution proposed in our study was obtained by means of RBTM. At the same time, this new distribution, which will be a competitor to the distributions generated using RBTM, will fill an important gap in the literature, as it has never been studied before.

The remainder of this paper is organized as follows. Sections \ref{Sec: RBTLL} and \ref{Sec: propRBTLL} introduces a new submodel of the record-based transmuted family of distributions based on log-logistic distribution and some statistical properties, respectively. In Section \ref{Estimation methods}, seven estimators are suggested to estimate the parameters of the proposed distribution. Then, a simulation study is considered to observe the performances of these estimators under different sample sizes and parameter settings. To generate a random sample from the introduced distribution, we provide an algorithm in Section \ref{Sec: simulation}. In Section \ref{Sec: realdata}, two real-world data examples associated with reactor pump failure and petroleum rock are presented. Finally, the concluding remarks are given in Section \ref{Sec: Conc}.

\section{Record-based transmuted Log-Logistic distribution}
\label{Sec: RBTLL}
The log-logistic distribution of the cumulative distribution function (CDF) and the probability density function (PDF) are 
\begin{align}
\label{CDF LL}
G(x)=\frac{e^{\gamma} x^{\upsilon}}{1 + e^{\gamma} x^{\upsilon}}
\end{align}
and 
\begin{align}
\label{PDF LL}
f(x; \gamma,\upsilon) = \frac{e^{\gamma} \, \upsilon \, x^{\upsilon - 1}}{\left(1 + e^{\gamma} x^{\upsilon} \right)^2}; x>0,~\gamma,~\upsilon>0.
\end{align}

By using a transformation defined by \cite{balakrishnan2021record} on the CDF \eqref{CDF LL}, we obtain the record-based transmuted log-logistic (RBTLL) distribution, with the CDF and PDF given as follows:
\begin{align}
\label{CDF RBTLL}
F(x; \gamma, \upsilon,p)=\frac{e^{\gamma} x^{\upsilon}}{1 + e^{\gamma} x^{\upsilon}} + p \left(1 - \frac{e^{\gamma} x^{\upsilon}}{1 + e^{\gamma} x^{\upsilon}}\right) \log \left(1 - \frac{e^{\gamma} x^{\upsilon}}{1 + e^{\gamma} x^{\upsilon}}\right),
\end{align}

and
\begin{align}
\label{PDF RBTLL}
f(x; \gamma, \upsilon,p)=\frac{e^{\gamma} \upsilon x^{\upsilon-1}}{(1+e^{\gamma} x^{\upsilon})^2} \left(1 - p \left(\log\left(1 - \frac{e^{\gamma} x^{\upsilon}}{1+e^{\gamma} x^{\upsilon}}\right) + 1\right) \right),
\end{align}
where $~x>0,~\gamma,~\upsilon,~0<p<1$.

Figure \ref{pdf} shows possible shapes of PDF for the selected parameters.  

\begin{figure} [H]
    \centering
    \includegraphics[width=0.8\linewidth]{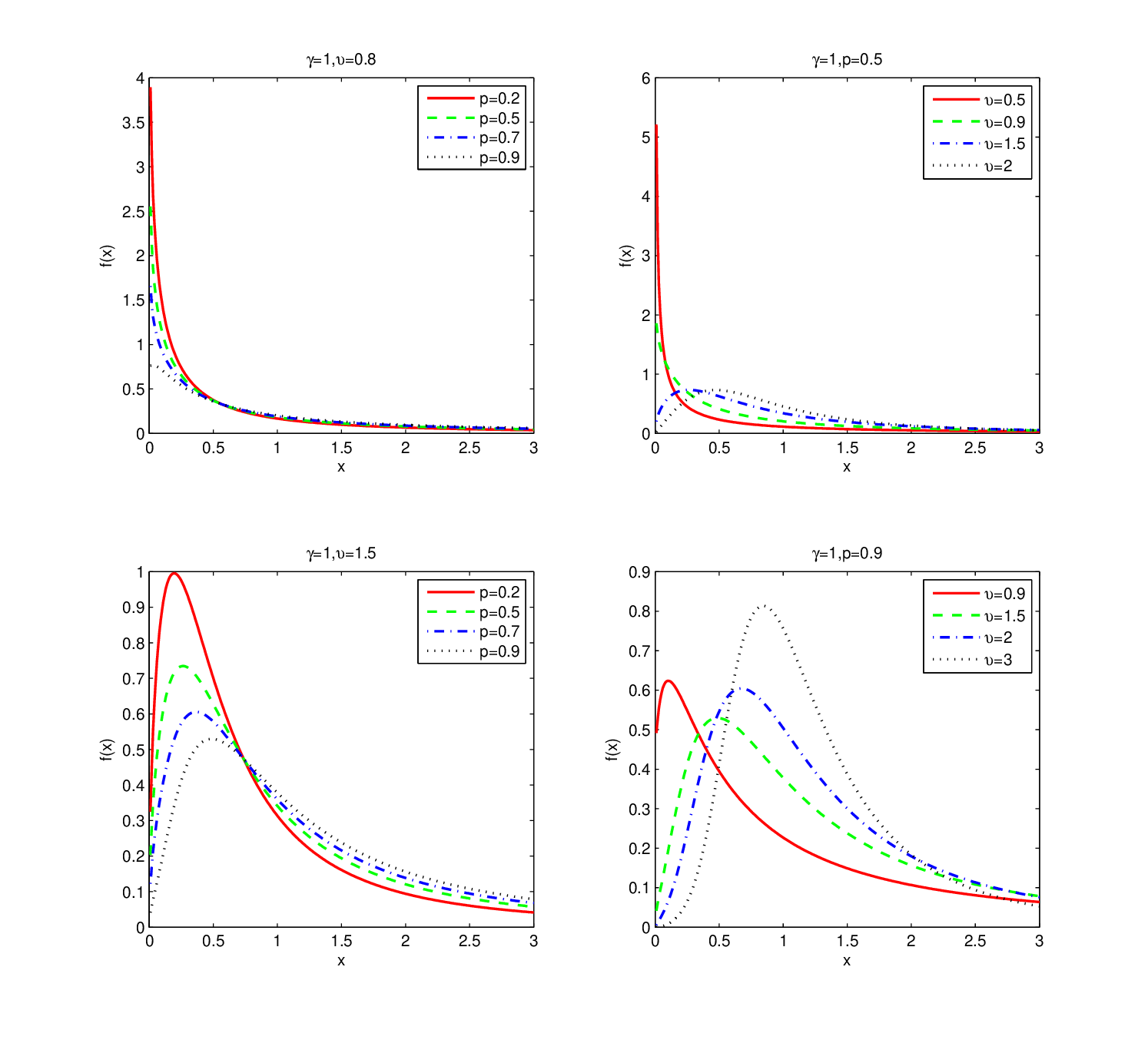}
    \caption{The PDFs of RBTLL distribution for the selected parameter values}
\label{pdf}
\end{figure}

\section{Distributional properties of the RBTLL distribution}
\label{Sec: propRBTLL}

In this section, we discuss some statistical properties of the RBTLL distribution such as the quantile function, hazard function (hf), moments, order statistics, and stochastic ordering.
\subsection{Quantile function}
The quantile function of RBTLL distribution is given as follows:
\begin{align}
\label{QF RBTLL}
 Q(x) =\exp\left( \frac{ \log\big(1 - u + p\,\mathrm{LambertW}\big( \frac{-1 + u}{p} e^{-1/p} \big) \big) - \log(-1 + u) - \gamma }{\upsilon} \right),
\end{align}
where, $0<u<1$ and \( W(\cdot) \) denotes the Lambert W function is defined as the inverse relation of the function \( w \mapsto w e^w \), that is: $W(z) \cdot e^{W(z)} = z$.

 \subsection{Hazard function}
The hf of RBTLL distribution is
\begin{equation} 
\label{hf}
 h(x;\gamma,\upsilon,p)=\frac{ e^{\gamma} \upsilon x^{\upsilon - 1} \left( 1 + p \log\left(1 + e^{\gamma} x^{\upsilon}\right) - p \right) }
     { \left(1 + e^{\gamma} x^{\upsilon}\right) \left( 1 + p \log\left(1 + e^{\gamma} x^{\upsilon}\right) \right). }
     \end{equation}

\subsubsection{Hazard Shape}
We examine the possible shapes of the hf in detail using Glaser's theorem \cite{glaser1980bathtub}. In this regard, we use the logarithmic derivative \( \psi(x) = \frac{d}{dx} \log h(x) = \frac{h'(x)}{h(x)} \).
Let \( z(x) = e^{\gamma} x^{\upsilon} \). Then the hf can be rewritten as follows
\[
h(x) = \frac{e^{\gamma} \upsilon x^{\upsilon - 1} \left(1 + p \log(1 + z) - p\right)}{(1 + z)\left(1 + p \log(1 + z)\right)}.
\]

We define
\[
\psi(x) = \frac{d}{dx} \log h(x) = \frac{d}{dx} \left( \log N(x) - \log D(x) \right) = \frac{N'(x)}{N(x)} - \frac{D'(x)}{D(x)},
\]
where
\[
N(x) = e^{\gamma} \upsilon x^{\upsilon - 1} \left(1 + p \log(1 + z) - p \right), \quad D(x) = (1 + z)\left(1 + p \log(1 + z)\right).
\]

We now analyze the sign of \( \psi(x) \) under three parameter regimes:

\textbf{Theorem.} Let \( h(x;\gamma,\upsilon,p) \) be as defined Eq. in (\ref{hf}). Then:
\begin{itemize}
    \item[(i)] If \( \upsilon < 1 \), then \( h(x) \) is decreasing.
    \item[(ii)] If \( \upsilon > 1 \), then \( h(x) \) is increasing.
    \item[(iii)] If \( \upsilon = 1 \), then \( h(x) \) may be unimodal or bathtub-shaped depending on \( \gamma \) and \( p \).
\end{itemize}

\begin{proof}
We obtain $\psi(x)$ as follows
\[
\psi(x) = \frac{\upsilon - 1}{x} - \frac{z'(x)}{1 + z} \cdot \left[ 1 + \frac{p}{1 + p \log(1 + z)} - \frac{p}{1 + p \log(1 + z) - p} \right].
\]

We assess the behavior of hf according to the sign of $\psi(x)$:

\begin{itemize}
    \item[(i)] For \( \upsilon < 1 \), the term \( \frac{\upsilon - 1}{x} < 0 \), and the second term is always positive . For this reason, \( \psi(x) < 0 \) for all \( x \), and the hf is decreasing.

    \item[(ii)] For \( \upsilon > 1 \), the first term \( \frac{\upsilon - 1}{x} > 0 \), and although the second term subtracts a positive quantity, the net effect for small and moderate \( x \) is dominated by the positive first term. As \( x \to \infty \), both terms decay, but the total expression remains non-negative. Therefore, \( \psi(x) > 0 \), indicating an increasing hf.

    \item[(iii)] When \( \upsilon = 1 \), the first term vanishes: \( \frac{\upsilon - 1}{x} = 0 \). Then the sign of \( \psi(x) \) is entirely governed by the remaining bracketed difference:
    \[
    \psi(x) = - \frac{z'(x)}{1 + z} \cdot \left[ 1 + \frac{p}{1 + p \log(1 + z)} - \frac{p}{1 + p \log(1 + z) - p} \right].
    \]
    For small \( x \), \( z \to 0 \), so the bracketed term is positive, and thus \( \psi(x) < 0 \). For large \( x \), the denominators behave such that the entire bracket may become negative, resulting in \( \psi(x) > 0 \). Therefore, \( \psi(x) \) changes sign, and the hazard function can exhibit a bathtub or unimodal shape depending on \( \gamma \) and \( p \).
\end{itemize}
Thus, the proof is completed.
\end{proof}

Figure \ref{hazard} illustrates possible shapes of hf for some parameters values.

\begin{figure} [H]
    \centering
    \includegraphics[width=0.8\linewidth]{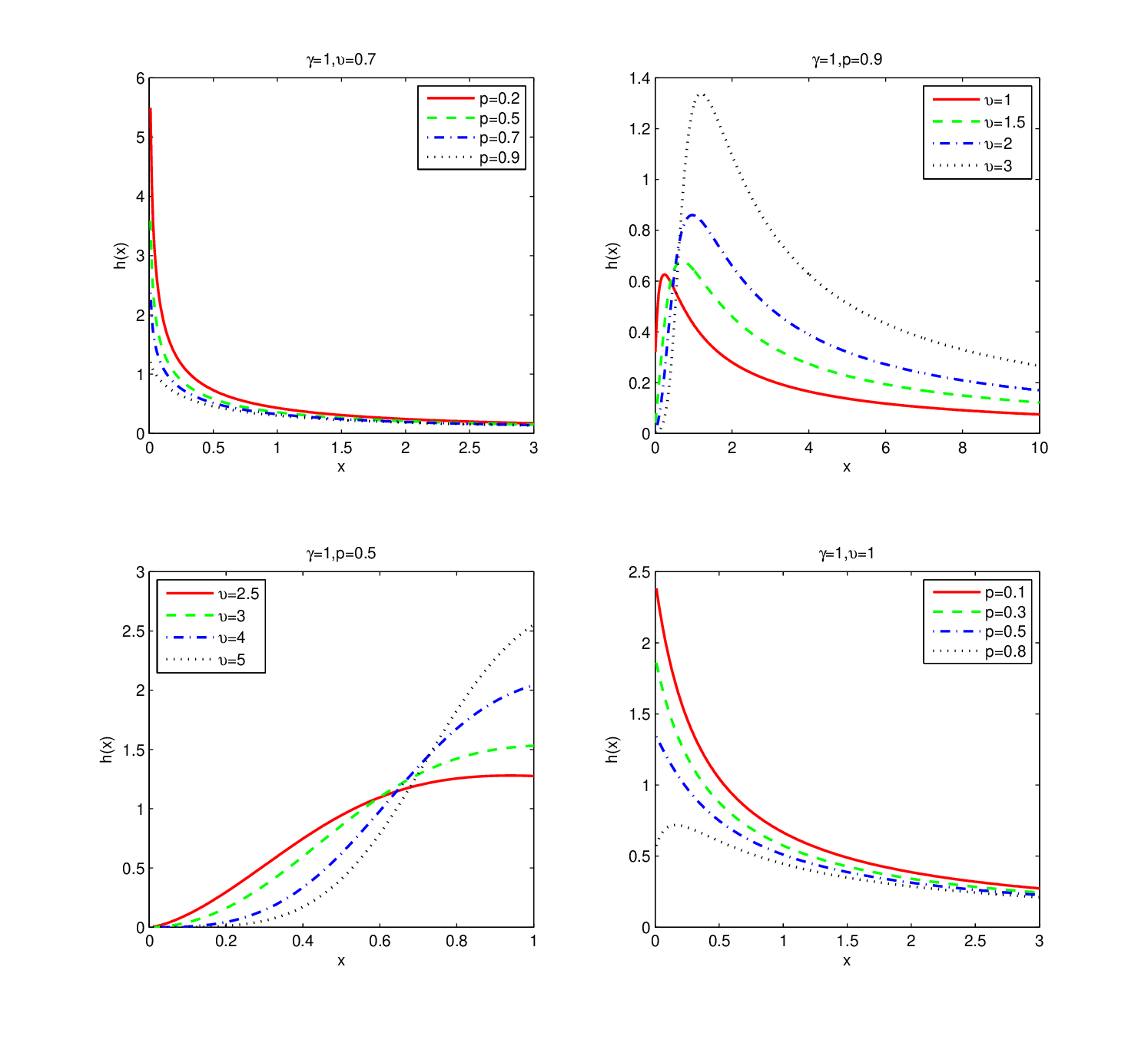}
    \caption{The HFs of RBTLL distribution for the selected parameter values}
\label{hazard}
\end{figure}

From Figure \ref{hazard}, we observe that the hf of the RBTLL distribution can be shaped as an increasing, decreasing, and unimodal. 

\subsection{Moments}

Let \(X\) be a distributed positive random variable RBTLL distribution with $\gamma$, $\upsilon$ and $p$ parameters whose \(r\) th moment (\(r \in \mathbb{N}\)) is given by

\[
E\left(X^{r}\right) = \int_0^\infty \frac{x^{r} e^{\gamma} \upsilon x^{\upsilon -1}}{\left(1 + e^{\gamma} x^{\upsilon}\right)^2} \left[ 1 + p \left( \log\left(1 + e^{\gamma} x^{\upsilon}\right) - 1 \right) \right] dx,
\quad \gamma, \upsilon > 0, \quad 0 < p < 1.
\]

Due to the complexity of the integrand, in particular the logarithmic term inside the integral, a closed-form expression for \(E(X^{r})\) is not available. However, the integral can be decomposed into two parts:

\[
E(X^{r}) = e^{\gamma} \upsilon \sum_{k=0}^\infty (-1)^k (k+1)(k+2) e^{\gamma k} \Gamma\left(\frac{r + \upsilon (k+1)}{\upsilon}\right) + \mathcal{L}_{p,\gamma,\upsilon}(r),
\]

where

\[
\mathcal{L}_{p,\gamma,\upsilon}(r) := \int_0^\infty \frac{x^{r} e^{\gamma} \upsilon x^{\upsilon -1}}{\left(1 + e^{\gamma} x^{\upsilon}\right)^2} \cdot p \left( \log\left(1 + e^{\gamma} x^{\upsilon}\right) - 1 \right) dx.
\]

 \(\mathcal{L}_{p,\gamma,\upsilon}(r)\), includes the logarithmic term and cannot be expressed in closed form or as a convergent power series due to its non-polynomial nature and problems of divergence at infinity. Therefore, it is calculated numerically.

\subsection{Order Statistics}
This section presents the PDFs of the first, \( r \)th, and \( n \)th order statistics for the RBTLL distribution.

Let \( X_{1:n} \leq X_{2:n} \leq \cdots \leq X_{n:n} \) refer to the order statistics of a random sample from the RBTLL distribution with the CDF in Eq. (\ref{CDF RBTLL}) and PDF in Eq. (\ref{PDF RBTLL}).

The probability density function of the minimum order statistic \( X_{1:n} \), \( X_{r:n} \), and \( X_{n:n} \) are given by respectively,
\begin{align}
f_{X_{1:n}}(x) =n \left( \frac{1}{1 + e^{\gamma} x^{\upsilon}} \right)^{n - 1} \cdot \frac{\gamma \upsilon x^{\gamma - 1}}{\left(1 + \left( \frac{x}{\upsilon} \right)^{\gamma} \right)^2},
\end{align}

\begin{align}
f_{X_{r:n}}(x) = \frac{n!}{(r - 1)! (n - r)!} \left( \frac{e^{\gamma} x^{\upsilon}}{1 + e^{\gamma} x^{\upsilon}} \right)^{r - 1} \left( \frac{1}{1 + e^{\gamma} x^{\upsilon}} \right)^{n - r} \cdot \frac{\gamma \upsilon x^{\gamma - 1}}{\left(1 + \left( \frac{x}{\upsilon} \right)^{\gamma} \right)^2},
\end{align}
and

\begin{align}
f_{X_{n:n}}(x) = n \left( \frac{e^{\gamma} x^{\upsilon}}{1 + e^{\gamma} x^{\upsilon}} \right)^{n - 1} \cdot \frac{\gamma \upsilon x^{\gamma - 1}}{\left(1 + \left( \frac{x}{\upsilon} \right)^{\gamma} \right)^2}.
\end{align}

\subsection{Stochastic Ordering}
In this subsection, we examine the stochastic ordering for the RBTLL distribution. In this regard, we define the following theorem.
\begin{theorem}
\label{lrordering} Let $X\sim RBTLL\left( \gamma ,\upsilon ,p_{1}\right) $ and
$Y\sim RBTLL\left( \gamma ,\upsilon ,p_{2}\right) .$ If $p_{1}<p_{2}$ then $X$
is less than $Y$ in the likelihood ratio order, that is, the ratio function
of the corresponding PDFs decreases in $x$.
\end{theorem}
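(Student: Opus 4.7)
The plan is to form the density ratio and exploit the fact that the $p$-dependence in the RBTLL density is linear. First, using the identity $1 - \frac{e^{\gamma} x^{\upsilon}}{1+e^{\gamma} x^{\upsilon}} = \frac{1}{1+e^{\gamma} x^{\upsilon}}$, I would rewrite the PDF in Eq.~(\ref{PDF RBTLL}) in the cleaner form
$$f(x;\gamma,\upsilon,p) = \frac{e^{\gamma}\upsilon x^{\upsilon-1}}{(1+e^{\gamma}x^{\upsilon})^{2}}\,\bigl[\,1 - p + p\,u(x)\,\bigr], \qquad u(x) := \log\bigl(1+e^{\gamma}x^{\upsilon}\bigr).$$
Because $\gamma$ and $\upsilon$ are shared by $X$ and $Y$, the entire prefactor cancels in $f_{X}(x)/f_{Y}(x)$, leaving the simple rational function
$$\frac{f_{X}(x)}{f_{Y}(x)} = \frac{1 - p_{1} + p_{1}u(x)}{1 - p_{2} + p_{2}u(x)}$$
of the single quantity $u(x)$.

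The second step is to differentiate this ratio. Setting $A(x)=1-p_{1}+p_{1}u(x)$ and $B(x)=1-p_{2}+p_{2}u(x)$, the quotient rule gives $(A/B)' = u'(x)\,[p_{1}B - p_{2}A]/B^{2}$. Expanding the bracket, the mixed terms cancel exactly: $p_{1}B - p_{2}A = (p_{1}-p_{2}) + p_{1}p_{2}u - p_{1}p_{2}u = p_{1}-p_{2}$. This algebraic collapse — the disappearance of the $x$-dependence inside the bracket — is the pivotal step and essentially the only substance in the argument.

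Finally I would check signs. For $x>0$ we have $u(x)>0$ and $u'(x)=\upsilon e^{\gamma} x^{\upsilon-1}/(1+e^{\gamma}x^{\upsilon})>0$; with $p_{2}<1$ the factor $B(x)=1-p_{2}+p_{2}u(x)\ge 1-p_{2}>0$, so $B(x)^{2}>0$; and by hypothesis $p_{1}-p_{2}<0$. Hence $(f_{X}/f_{Y})'(x)<0$ on $(0,\infty)$, so $f_{X}(x)/f_{Y}(x)$ is strictly decreasing, which is exactly the statement $X\le_{\mathrm{lr}} Y$. The main obstacle is purely bookkeeping: the PDF as displayed in Eq.~(\ref{PDF RBTLL}) involves $\log\!\bigl(1-\tfrac{e^{\gamma}x^{\upsilon}}{1+e^{\gamma}x^{\upsilon}}\bigr)$, and one has to recast it in terms of $\log(1+e^{\gamma}x^{\upsilon})$ before the linearity in $p$ — and the subsequent cancellation of the cross term $p_{1}p_{2}u(x)$ — becomes transparent.
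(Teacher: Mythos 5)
Your proposal is correct and follows essentially the same route as the paper: both reduce the density ratio to $\bigl(1-p_{1}+p_{1}u(x)\bigr)/\bigl(1-p_{2}+p_{2}u(x)\bigr)$ with $u(x)=\log(1+e^{\gamma}x^{\upsilon})$ and show the derivative is negative via the cancellation $p_{1}B-p_{2}A=p_{1}-p_{2}$. The only cosmetic difference is that the paper differentiates $\log g(x)$ while you differentiate the ratio itself, which changes the denominator from $A(x)B(x)$ to $B(x)^{2}$ but not the sign argument.
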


\begin{proof}
The ratio of densities for any $x>0$ is as follows:
\begin{equation*}
g\left( x\right) =\frac{1 + p_1 \log\left(1 + e^{\gamma} x^{\upsilon}\right) - p_1}{1 + p_2 \log\left(1 + e^{\gamma} x^{\upsilon}\right) - p_2}.
\end{equation*}

Then, consider the derivative of $\log \left( g\left( x\right) \right) $ in $x$%
\begin{equation*}
\frac{d\log \left( g\left( x\right) \right) }{dx}=\frac{ e^{\alpha} x^{\beta} \beta (p_1 - p_2) }
     { x \left(1 + e^{\alpha} x^{\beta}\right) \left(1 + p_2 \log\left(1 + e^{\alpha} x^{\beta}\right) - p_2\right) \left(1 + p_1 \log\left(1 + e^{\alpha} x^{\beta}\right) - p_1\right) }<0
\end{equation*}%
for $p_{1}<p_{2}$, and thus the proof is completed.
\end{proof}

We notice that  it follows that $X$ is also less than $Y$ in the hazard ratio, the mean residual life, and the stochastic orders under the conditions
given in Theorem \ref{lrordering} from \cite{shaked1994stochastic}.

\section{Parameter Estimation}\label{Estimation methods}
In this section, we discuss point estimation for the RBTLL distribution. To estimate the parameters of RBTLL distribution, we utilize seven different methods. 

\subsection{Method of maximum likelihood}
In this subsection, we propose ML estimators (MLEs) of the parameters $\gamma$, $\upsilon$, and $p$ of the RBTLL distribution. 

Let ${X_1},{X_2}...{X_n}$ be a random sample from the RBTLL and ${x_1},{x_2}...{x_n}$  refer to the observed values of the sample. Then, the corresponding log-likelihood function is given by

\begin{align}
\label{mle}
\ell(\gamma, \upsilon, p) = \sum_{i=1}^n \Bigg[
\gamma + \log(\upsilon) + (\upsilon - 1) \log x_i - 2 \log \left(1 + e^{\gamma} x_i^{\upsilon} \right) + \log \left(1 - p \left(\log \left(1 - \frac{e^{\gamma} x_i^{\upsilon}}{1 + e^{\gamma} x_i^{\upsilon}}\right) + 1\right)\right)
\Bigg].
\end{align}

We obtain the MLEs of the parameters $\gamma, \upsilon$ and $p$ of the RBTLL by maximizing Eq. (\ref{mle}). This optimization problem is solved via numerical methods such as Newton-Raphson and Nelder Mead. 
\subsection{Method of least squares}
This subsection provides the LS estimators (LSEs) the $\gamma, \upsilon$ and $p$ parameters of the RBTLL distribution. This estimator is proposed by \cite{swain1988least} as an alternative to the MLE. The LSEs can be derived by minimizing the function given in Eq. (\ref{lse}).

\begin{align}
\label{lse}
LS(x_{i})&=\sum_{i=1}^{n}\left[F(x_{i:n})-\frac{i}{n+1}\right]^2,
\end{align}
where ${x_{i:n}}$ for $i = 1,2...n$ refer to the order statistics.
\subsection{Method of weighted least squares}
This subsection introduces the WLS estimators (WLSEs) of $\gamma, \upsilon$ and $p$ parameters for the RBTLL via the method proposed by \cite{swain1988least}. We obtain the WLSEs by minimizing the Eq. (\ref{wlse}).
\begin{align}
\label{wlse}
WLS(x_{i})&=\sum_{i=1}^{n}\frac{(n+1)^2(n+2)}{i(n-i+1)}\left[F(x_{i:n})-\frac{i}{n+1}\right]^2.
\end{align} 

\subsection{Method of Anderson-Darling}
In this subsection, we discuss the AD estimators (ADEs) of $\gamma, \upsilon$ and $p$ parameters. This method is related to the AD goodness-of-fit statistic proposed by \cite{anderson1952asymptotic}. The ADEs ${\hat \gamma _{ADE}},{\hat \upsilon _{ADE}}$, ${\hat p _{ADE}}$ of the parameters $\gamma, \upsilon$ and $p$ can be obtained by minimizing Eq (\ref{ade}).

\begin{align}
\label{ade}
AD(x_{i})&=-n-\frac{1}{n}\sum_{i=1}^{n}(2i-1)\left[\log F(x_{i:n})+\log S(x_{n-i-1:n})\right].
\end{align} 


\subsection{Method of Cram\'{e}r-von Mises}
In this section, we provide the CvM estimators (CvMEs) of parameters of RBTLL distribution. The Cram\'{e}r-von Mises method depends on minimizing the difference between the CDF and the empirical distribution functions, as proposed by \cite{choi1968estimation}. The CvMEs are computed by maximizing the function in Eq. (\ref{cvme}).

\begin{align}
\label{cvme}
C(x_{i})&=\frac{1}{12n}+\sum_{i=1}^{n}\left[F(x_{i:n})-\frac{2i-1}{2n}\right]^2.
\end{align}
\subsection{Method of maximum product of spacings}
In this subsection, we deal with the point estimation for the RBTLL distribution via the MPS method. The MPS method introduced by \cite{cheng1983estimating} and \cite{ranneby1984maximum} as an alternative to the ML method, the MPS method based on maximizing the function in Eq. (\ref{mpse}) to derive the MPS estimators (MPSEs). 
\begin{align}
\label{mpse}
\delta\left(x_{i} \right) =\frac{1}{n+1}\sum_{i=1}^{n+1}\log I_{i}(x_{i}),
\end{align}

where $I_{i}(x_{i})=F(x_{i:n})-F(x_{i-1:n})$, $F(x_{0:n})=0$ and $F(x_{n+1:n})=1.$

\subsection{Right tail Anderson Darling estimation method}
This subsection proposes the RTAD estimators (RTADEs) of $\gamma$, $\upsilon$, and $p$ parameters. The RTADEs are obtained by minimizing the Eq. (\ref{RTADE}).
\begin{equation} \label{RTADE}
    L\left( {\omega,\kappa,p} \right) = \frac{n}{2} - 2\sum\limits_{i = 1}^n {F\left( {{x_{i:n}}} \right) - \frac{1}{n}} \sum\limits_{i = 1}^n {\left( {2i - 1} \right)\log \left( {1 - F\left( {{x_{n - i + 1}}} \right)} \right)} .
\end{equation}

\section{Simulation}
\label{Sec: simulation}

This section provides a comprehensive Monte-Carlo (MC) Simulation study to assess the performance of the examined estimators in Section  \ref{Estimation methods}. We consider the parameter settings and initial values in the MC simulations as follows:
In all MC simulations, the sample sizes, $n=50,100,200,500$ with 5000 repetitions and the initial values of the $\gamma$, $\upsilon$ and $p$ parameters are considered as follows: \newline
$Case_{I} = \left( {\gamma  = 3, \upsilon  = 1.5,p = 0.75} \right)$,\\
$Case_{II} = \left( {\gamma  = 2,\upsilon  = 0.9,p = 0.5} \right)$,\\
$Case_{III} = \left( {\gamma  = 1.5,\upsilon  = 2,p = 0.4} \right)$,\\
$Case_{IV} = \left( {\gamma  = 2.5,\upsilon = 0.6,p = 0.3} \right)$,\\

We evaluate the performances of the mentioned estimators via the bias, mean squared error (MSE), and mean relative error (MRE) values. The corresponding formulas of these measures are 

$$Bias=\frac{1}{5000}\sum\limits_{i=1}^{5000}\left( \hat{\Theta}-\Theta
\right),$$

$$MSE=\frac{1}{5000}\sum\limits_{i=1}^{5000}\left( \hat{\Theta%
}-\Theta \right) ^{2},$$

$$MRE=\frac{1}{5000}\sum_{i=1}^{5000}\frac{|\hat{\Theta}-\Theta|}{\Theta},$$
where $\Theta=\left(\gamma,\upsilon,p\right)$. 
\subsection{Random sample generation}
In this subsection, we generate the random samples from the RBTLL $(\gamma, \upsilon, p)$ distribution. Therefore, we suggest an acceptance-rejection (AR) sampling algorithm.
We prefer the Weibull distribution which is a well-known distribution as the proposal distribution in the AR algorithm. The AR algorithm is given as follows:

\textbf{Algorithm 1.}

\textbf{A1.} Generate data on random variable $Y \sim Weibull(\vartheta,\varpi)$ 
with the PDF $g$ given as follows: 
\begin{equation*}
g\left( \vartheta,\varpi \right)=\vartheta \varpi x^{\varpi -1}e^{-\vartheta
x^\varpi}.
\end{equation*}

\textbf{A2.} Generate $U$ from standard uniform distribution(independent of $%
Y$).

\textbf{A3.} If%
\begin{equation*}
U<\frac{f\left( Y;\gamma,\upsilon,p\right) }{k\times g\left(
Y;\vartheta,\varpi \right) }
\end{equation*}%
then set $X=Y$ (\textquotedblleft accept\textquotedblright ); otherwise go
back to A1 (\textquotedblleft reject\textquotedblright ), where the PDF $f$ $%
\left( .\right) $ is given as in Eq. (\ref{PDF RBTLL})  and 
\begin{equation*}
k=\underset{z\in 
\mathbb{R}
_{+}}{\max }\frac{f\left( z;\gamma,\upsilon,p\right) }{g\left(
z;\vartheta,\varpi \right) }.
\end{equation*}%

We generate random data on $X$ from the $RBTLL(\gamma, \upsilon, p)$ via AR algorithm. We utilize Algorithm 1 in all MC simulations.

The results of MC simulations are given in Tables \ref{sim:tab1}-\ref{sim:tab4}.


\begin{table}[H]
\centering
\caption{The biases, MSEs and MREs for $\gamma=3$, $\upsilon=1.5$ and $p=0.75$}
\scalebox{1} {\ \label{sim:tab1}
\begin{tabular}{ccccccccccc}\hline
          &      &         & Bias    &         &        & MSE    &        &        & MRE    &        \\\hline
Estimator & $n$    & $\hat\gamma$   & $\hat\upsilon$   & $\hat p$  & $\hat\gamma$  & $\hat\upsilon$  & $\hat p$ & $\hat\gamma$  & $\hat\upsilon$  & $\hat p$ \\\hline

MLE       & 50  & -0.1591 & 0.0357  & -0.1255 & 0.2974 & 0.0404 & 0.0887 & 0.1402 & 0.1035 & 0.2932 \\
          & 100 & -0.1507 & 0.0252  & -0.1117 & 0.2287 & 0.0208 & 0.0856 & 0.1192 & 0.0740 & 0.2963 \\
          & 200 & -0.1435 & 0.0081  & -0.0963 & 0.1972 & 0.0109 & 0.0793 & 0.1050 & 0.0558 & 0.2757 \\
          & 500 & -0.1158 & 0.0061  & -0.0801 & 0.1468 & 0.0052 & 0.0643 & 0.0843 & 0.0390 & 0.2346 \\\hline
LSE       & 50  & -0.2800 & -0.0529 & -0.1313 & 0.7102 & 0.0531 & 0.2553 & 0.2112 & 0.1226 & 0.5238 \\
          & 100 & -0.2487 & -0.0376 & -0.1234 & 0.5391 & 0.0269 & 0.1978 & 0.1799 & 0.0885 & 0.4572 \\
          & 200 & -0.2360 & -0.0260 & -0.1273 & 0.4025 & 0.0146 & 0.1552 & 0.1502 & 0.0649 & 0.3933 \\
          & 500 & -0.1535 & -0.0116 & -0.0888 & 0.2309 & 0.0073 & 0.0975 & 0.1061 & 0.0457 & 0.2977 \\\hline
WLSE      & 50  & -0.2678 & -0.0318 & -0.1422 & 0.5411 & 0.0420 & 0.1811 & 0.1852 & 0.1080 & 0.4139 \\
          & 100 & -0.2201 & -0.0146 & -0.1248 & 0.3846 & 0.0212 & 0.1380 & 0.1495 & 0.0770 & 0.3675 \\
          & 200 & -0.1926 & -0.0093 & -0.1138 & 0.2828 & 0.0113 & 0.1084 & 0.1239 & 0.0570 & 0.3170 \\
          & 500 & -0.1237 & -0.0026 & -0.0779 & 0.1659 & 0.0054 & 0.0711 & 0.0884 & 0.0395 & 0.2448 \\\hline
ADE       & 50  & -0.2144 & -0.0021 & -0.1325 & 0.4508 & 0.0407 & 0.1470 & 0.1687 & 0.1051 & 0.3651 \\
          & 100 & -0.1920 & -0.0022 & -0.1171 & 0.3549 & 0.0208 & 0.1253 & 0.1434 & 0.0756 & 0.3464 \\
          & 200 & -0.1853 & -0.0048 & -0.1128 & 0.2777 & 0.0111 & 0.1066 & 0.1226 & 0.0566 & 0.3126 \\
          & 500 & -0.1237 & -0.0017 & -0.0785 & 0.1698 & 0.0054 & 0.0725 & 0.0892 & 0.0396 & 0.2461 \\\hline
CvME      & 50  & -0.2278 & -0.0128 & -0.1295 & 0.7283 & 0.0551 & 0.2734 & 0.2175 & 0.1223 & 0.5511 \\
          & 100 & -0.2210 & -0.0175 & -0.1214 & 0.5435 & 0.0271 & 0.2047 & 0.1833 & 0.0881 & 0.4691 \\
          & 200 & -0.2224 & -0.0157 & -0.1266 & 0.4025 & 0.0146 & 0.1580 & 0.1515 & 0.0647 & 0.3992 \\
          & 500 & -0.1478 & -0.0074 & -0.0885 & 0.2309 & 0.0074 & 0.0983 & 0.1069 & 0.0459 & 0.3001 \\\hline
MPSE      & 50  & -0.2670 & -0.0749 & -0.1016 & 0.3480 & 0.0385 & 0.0809 & 0.1427 & 0.1067 & 0.2293 \\
          & 100 & -0.2025 & -0.0371 & -0.0921 & 0.2442 & 0.0189 & 0.0771 & 0.1161 & 0.0723 & 0.2476 \\
          & 200 & -0.1591 & -0.0269 & -0.0767 & 0.2022 & 0.0102 & 0.0705 & 0.0989 & 0.0543 & 0.2345 \\
          & 500 & -0.1137 & -0.0111 & -0.0641 & 0.1405 & 0.0047 & 0.0581 & 0.0783 & 0.0372 & 0.2111 \\\hline
TADE      & 50  & -0.1467 & -0.0581 & -0.0272 & 0.6865 & 0.0511 & 0.2577 & 0.2091 & 0.1205 & 0.5576 \\
          & 100 & -0.2039 & -0.0437 & -0.0835 & 0.5977 & 0.0284 & 0.2107 & 0.1898 & 0.0899 & 0.4848 \\
          & 200 & -0.2245 & -0.0305 & -0.1131 & 0.4500 & 0.0146 & 0.1665 & 0.1606 & 0.0641 & 0.4198 \\
          & 500 & -0.1530 & -0.0103 & -0.0887 & 0.2454 & 0.0064 & 0.0980 & 0.1123 & 0.0428 & 0.3043 

\\\hline
\end{tabular}
}
\end{table}

\begin{table}[H]
\centering
\caption{The biases, MSEs and MREs for $\gamma=2$, $\upsilon=0.9$ and $p=0.5$}
\scalebox{1} {\ \label{sim:tab2}
\begin{tabular}{ccccccccccc}\hline
          &      &         & Bias    &         &        & MSE    &        &        & MRE    &        \\\hline
Estimator & $n$    & $\hat\gamma$   & $\hat\upsilon$   & $\hat p$  & $\hat\gamma$  & $\hat\upsilon$  & $\hat p$ & $\hat\gamma$  & $\hat\upsilon$  & $\hat p$ \\\hline

MLE       & 50  & 0.2393  & 0.0726  & 0.0297  & 0.2551 & 0.0165 & 0.0517 & 0.1984 & 0.1093 & 0.3842 \\
          & 100 & 0.1984  & 0.0698  & 0.0119  & 0.1876 & 0.0106 & 0.0484 & 0.1705 & 0.0903 & 0.3690 \\
          & 200 & 0.1500  & 0.0603  & -0.0176 & 0.1441 & 0.0065 & 0.0437 & 0.1443 & 0.0734 & 0.3542 \\
          & 500 & -0.0181 & 0.0633  & -0.1195 & 0.0580 & 0.0051 & 0.0324 & 0.0907 & 0.0712 & 0.3217 \\\hline
LSE       & 50  & 0.2391  & -0.0383 & 0.1439  & 0.5715 & 0.0168 & 0.2156 & 0.3305 & 0.1176 & 0.8435 \\
          & 100 & 0.2543  & -0.0187 & 0.1376  & 0.4298 & 0.0088 & 0.1668 & 0.2924 & 0.0832 & 0.7409 \\
          & 200 & 0.2893  & -0.0135 & 0.1461  & 0.3393 & 0.0044 & 0.1235 & 0.2633 & 0.0584 & 0.6339 \\
          & 500 & 0.1950  & 0.0045  & 0.0731  & 0.2103 & 0.0020 & 0.0737 & 0.2015 & 0.0384 & 0.4731 \\\hline
WLSE      & 50  & 0.2213  & -0.0059 & 0.0974  & 0.4484 & 0.0124 & 0.1403 & 0.2870 & 0.0978 & 0.6684 \\
          & 100 & 0.1679  & 0.0164  & 0.0449  & 0.3422 & 0.0070 & 0.1083 & 0.2489 & 0.0720 & 0.5731 \\
          & 200 & 0.0905  & 0.0202  & -0.0164 & 0.2868 & 0.0040 & 0.0935 & 0.2257 & 0.0546 & 0.5240 \\
          & 500 & -0.1985 & 0.0271  & -0.2002 & 0.2417 & 0.0031 & 0.0992 & 0.1942 & 0.0523 & 0.5265 \\\hline
ADE       & 50  & 0.2417  & 0.0086  & 0.0952  & 0.4505 & 0.0131 & 0.1375 & 0.2862 & 0.1003 & 0.6622 \\
          & 100 & 0.2273  & 0.0193  & 0.0806  & 0.3343 & 0.0073 & 0.1069 & 0.2494 & 0.0737 & 0.5754 \\
          & 200 & 0.2270  & 0.0174  & 0.0746  & 0.2609 & 0.0039 & 0.0839 & 0.2190 & 0.0544 & 0.5052 \\
          & 500 & 0.0711  & 0.0292  & -0.0296 & 0.1520 & 0.0024 & 0.0534 & 0.1605 & 0.0452 & 0.3941 \\\hline
CvME      & 50  & 0.2766  & -0.0149 & 0.1462  & 0.6343 & 0.0171 & 0.2341 & 0.3496 & 0.1161 & 0.8784 \\
          & 100 & 0.2738  & -0.0067 & 0.1390  & 0.4567 & 0.0089 & 0.1746 & 0.3020 & 0.0828 & 0.7580 \\
          & 200 & 0.2989  & -0.0073 & 0.1465  & 0.3511 & 0.0044 & 0.1264 & 0.2676 & 0.0582 & 0.6404 \\
          & 500 & 0.1987  & 0.0071  & 0.0730  & 0.2132 & 0.0020 & 0.0743 & 0.2026 & 0.0391 & 0.4750 \\\hline
MPSE      & 50  & 0.0870  & -0.0095 & 0.0131  & 0.3692 & 0.0115 & 0.1174 & 0.2373 & 0.0945 & 0.5622 \\
          & 100 & 0.0855  & 0.0216  & -0.0140 & 0.2894 & 0.0068 & 0.0918 & 0.2132 & 0.0704 & 0.4980 \\
          & 200 & 0.0392  & 0.0253  & -0.0555 & 0.2848 & 0.0052 & 0.0958 & 0.2070 & 0.0628 & 0.4991 \\
          & 500 & -0.1790 & 0.0326  & -0.1953 & 0.2642 & 0.0052 & 0.1055 & 0.1762 & 0.0690 & 0.4993 \\\hline
TADE      & 50  & -0.1191 & -0.0469 & -0.0871 & 1.0312 & 0.0230 & 0.3405 & 0.4494 & 0.1356 & 1.0502 \\
          & 100 & -0.1480 & -0.0269 & -0.1234 & 0.7695 & 0.0141 & 0.2554 & 0.3747 & 0.1044 & 0.8764 \\
          & 200 & -0.2775 & -0.0309 & -0.2109 & 0.6830 & 0.0102 & 0.2317 & 0.3498 & 0.0884 & 0.8185 \\
          & 500 & -0.4909 & -0.0300 & -0.3422 & 0.6346 & 0.0077 & 0.2274 & 0.3294 & 0.0799 & 0.8080 

\\\hline
\end{tabular}
}
\end{table}

\begin{table}[H]
\centering
\caption{The biases, MSEs and MREs for $\gamma=1.5$, $\upsilon=2$ and $p=0.4$}
\scalebox{1} {\ \label{sim:tab3}
\begin{tabular}{ccccccccccc}\hline
          &      &         & Bias    &         &        & MSE    &        &        & MRE    &        \\\hline
Estimator & $n$    & $\hat\gamma$   & $\hat\upsilon$   & $\hat p$  & $\hat\gamma$  & $\hat\upsilon$  & $\hat p$ & $\hat\gamma$  & $\hat\upsilon$  & $\hat p$ \\\hline

 MLE       & 50  & 0.2360 & -0.0082 & 0.1399 & 0.2872 & 0.0633 & 0.0997 & 0.2940 & 0.0981 & 0.6627 \\
          & 100 & 0.1976 & -0.0447 & 0.1388 & 0.2612 & 0.0336 & 0.1017 & 0.2880 & 0.0743 & 0.6811 \\
          & 200 & 0.1729 & -0.0501 & 0.1266 & 0.2525 & 0.0191 & 0.1046 & 0.2897 & 0.0555 & 0.6995 \\
          & 500 & 0.0806 & -0.0519 & 0.0656 & 0.2002 & 0.0105 & 0.0818 & 0.2590 & 0.0413 & 0.6181 \\\hline
LSE       & 50  & 0.1169 & -0.1421 & 0.1028 & 0.6294 & 0.0952 & 0.2571 & 0.4683 & 0.1260 & 1.1648 \\
          & 100 & 0.1090 & -0.1285 & 0.1055 & 0.5175 & 0.0597 & 0.2058 & 0.4343 & 0.1005 & 1.0429 \\
          & 200 & 0.1378 & -0.0981 & 0.1175 & 0.3916 & 0.0321 & 0.1617 & 0.3768 & 0.0722 & 0.9169 \\
          & 500 & 0.0395 & -0.0789 & 0.0468 & 0.2762 & 0.0164 & 0.1099 & 0.3144 & 0.0512 & 0.7492 \\\hline
WLSE      & 50  & 0.1780 & -0.1116 & 0.1331 & 0.4763 & 0.0759 & 0.1872 & 0.4030 & 0.1120 & 0.9999 \\
          & 100 & 0.1385 & -0.1002 & 0.1162 & 0.3948 & 0.0430 & 0.1512 & 0.3741 & 0.0847 & 0.8918 \\
          & 200 & 0.1662 & -0.0746 & 0.1288 & 0.3050 & 0.0233 & 0.1243 & 0.3313 & 0.0616 & 0.8001 \\
          & 500 & 0.0666 & -0.0630 & 0.0598 & 0.2224 & 0.0118 & 0.0891 & 0.2820 & 0.0435 & 0.6702 \\\hline
ADE       & 50  & 0.1997 & -0.0706 & 0.1347 & 0.4238 & 0.0683 & 0.1608 & 0.3783 & 0.1046 & 0.9296 \\
          & 100 & 0.1579 & -0.0832 & 0.1239 & 0.3749 & 0.0399 & 0.1428 & 0.3650 & 0.0809 & 0.8703 \\
          & 200 & 0.1654 & -0.0696 & 0.1271 & 0.3037 & 0.0224 & 0.1240 & 0.3320 & 0.0600 & 0.8036 \\
          & 500 & 0.0663 & -0.0617 & 0.0592 & 0.2243 & 0.0117 & 0.0898 & 0.2835 & 0.0433 & 0.6744 \\\hline
CvME      & 50  & 0.1376 & -0.0904 & 0.1005 & 0.6800 & 0.0903 & 0.2745 & 0.4881 & 0.1202 & 1.1990 \\
          & 100 & 0.1219 & -0.1021 & 0.1060 & 0.5380 & 0.0558 & 0.2128 & 0.4429 & 0.0967 & 1.0560 \\
          & 200 & 0.1438 & -0.0846 & 0.1175 & 0.4000 & 0.0303 & 0.1646 & 0.3806 & 0.0699 & 0.9230 \\
          & 500 & 0.0412 & -0.0732 & 0.0463 & 0.2782 & 0.0157 & 0.1106 & 0.3152 & 0.0498 & 0.7504 \\\hline
MPSE      & 50  & 0.2014 & -0.1634 & 0.1690 & 0.2967 & 0.0815 & 0.1277 & 0.2994 & 0.1181 & 0.8050 \\
          & 100 & 0.1661 & -0.1361 & 0.1481 & 0.2810 & 0.0463 & 0.1215 & 0.3087 & 0.0894 & 0.8037 \\
          & 200 & 0.1545 & -0.1002 & 0.1311 & 0.2622 & 0.0252 & 0.1146 & 0.3059 & 0.0653 & 0.7725 \\
          & 500 & 0.0853 & -0.0787 & 0.0771 & 0.2212 & 0.0134 & 0.0918 & 0.2820 & 0.0479 & 0.6832 \\\hline
TADE      & 50  & 0.2496 & -0.1350 & 0.1983 & 0.7196 & 0.0955 & 0.3131 & 0.5044 & 0.1275 & 1.2715 \\
          & 100 & 0.1850 & -0.1254 & 0.1587 & 0.5587 & 0.0586 & 0.2295 & 0.4535 & 0.0986 & 1.1032 \\
          & 200 & 0.1646 & -0.0988 & 0.1366 & 0.4428 & 0.0341 & 0.1758 & 0.4051 & 0.0735 & 0.9597 \\
          & 500 & 0.0475 & -0.0776 & 0.0522 & 0.2897 & 0.0171 & 0.1134 & 0.3217 & 0.0514 & 0.7592 

\\\hline
\end{tabular}
}
\end{table}


\begin{table}[H]
\centering
\caption{The biases, MSEs and MREs for $\gamma=2.5$, $\upsilon=0.6$ and $p=0.3$}
\scalebox{1} {\ \label{sim:tab4}
\begin{tabular}{ccccccccccc}\hline
          &      &         & Bias    &         &        & MSE    &        &        & MRE    &        \\\hline
Estimator & $n$    & $\hat\gamma$   & $\hat\upsilon$   & $\hat p$  & $\hat\gamma$  & $\hat\upsilon$  & $\hat p$ & $\hat\gamma$  & $\hat\upsilon$  & $\hat p$ \\\hline

MLE       & 50  & 0.5881  & 0.0532  & 0.2189  & 0.5814 & 0.0078 & 0.1030 & 0.2502 & 0.1146 & 0.8349 \\
          & 100 & 0.5667  & 0.0457  & 0.2162  & 0.5077 & 0.0048 & 0.1023 & 0.2382 & 0.0922 & 0.8153 \\
          & 200 & 0.4927  & 0.0450  & 0.1683  & 0.3832 & 0.0034 & 0.0755 & 0.2053 & 0.0812 & 0.6695 \\
          & 500 & 0.3436  & 0.0456  & 0.0803  & 0.1827 & 0.0027 & 0.0260 & 0.1428 & 0.0770 & 0.3646 \\\hline
LSE       & 50  & 0.4889  & -0.0186 & 0.3247  & 0.8070 & 0.0081 & 0.3079 & 0.3139 & 0.1208 & 1.6646 \\
          & 100 & 0.5505  & -0.0113 & 0.3419  & 0.7036 & 0.0044 & 0.2675 & 0.2992 & 0.0876 & 1.5467 \\
          & 200 & 0.5766  & -0.0017 & 0.3322  & 0.6065 & 0.0022 & 0.2176 & 0.2795 & 0.0629 & 1.3923 \\
          & 500 & 0.5394  & 0.0083  & 0.2902  & 0.4421 & 0.0011 & 0.1455 & 0.2404 & 0.0431 & 1.1242 \\\hline
WLSE      & 50  & 0.5298  & 0.0019  & 0.3012  & 0.7072 & 0.0060 & 0.2200 & 0.2896 & 0.1013 & 1.3745 \\
          & 100 & 0.5130  & 0.0100  & 0.2648  & 0.6344 & 0.0033 & 0.1814 & 0.2774 & 0.0741 & 1.2377 \\
          & 200 & 0.4182  & 0.0200  & 0.1784  & 0.4794 & 0.0021 & 0.1235 & 0.2366 & 0.0619 & 0.9801 \\
          & 500 & 0.1631  & 0.0245  & 0.0128  & 0.2490 & 0.0018 & 0.0555 & 0.1684 & 0.0614 & 0.6207 \\\hline
ADE       & 50  & 0.5583  & 0.0128  & 0.2932  & 0.7313 & 0.0065 & 0.2093 & 0.2963 & 0.1060 & 1.3506 \\
          & 100 & 0.5723  & 0.0125  & 0.2980  & 0.6389 & 0.0035 & 0.1909 & 0.2795 & 0.0773 & 1.2660 \\
          & 200 & 0.5503  & 0.0180  & 0.2683  & 0.5233 & 0.0020 & 0.1509 & 0.2514 & 0.0608 & 1.1015 \\
          & 500 & 0.4338  & 0.0252  & 0.1843  & 0.3197 & 0.0014 & 0.0800 & 0.1956 & 0.0515 & 0.7687 \\\hline
CvME      & 50  & 0.5473  & -0.0028 & 0.3265  & 0.9185 & 0.0085 & 0.3269 & 0.3352 & 0.1207 & 1.7108 \\
          & 100 & 0.5825  & -0.0033 & 0.3443  & 0.7568 & 0.0045 & 0.2761 & 0.3102 & 0.0877 & 1.5685 \\
          & 200 & 0.5926  & 0.0025  & 0.3330  & 0.6311 & 0.0023 & 0.2208 & 0.2849 & 0.0639 & 1.3997 \\
          & 500 & 0.5458  & 0.0101  & 0.2902  & 0.4501 & 0.0011 & 0.1462 & 0.2425 & 0.0441 & 1.1244 \\\hline
MPSE      & 50  & 0.3090  & -0.0011 & 0.1664  & 0.5110 & 0.0052 & 0.1475 & 0.2323 & 0.0930 & 1.0717 \\
          & 100 & 0.3141  & 0.0102  & 0.1358  & 0.5329 & 0.0034 & 0.1361 & 0.2408 & 0.0761 & 1.0226 \\
          & 200 & 0.2929  & 0.0202  & 0.0963  & 0.4445 & 0.0026 & 0.1084 & 0.2221 & 0.0695 & 0.8789 \\
          & 500 & 0.1100  & 0.0223  & -0.0187 & 0.3232 & 0.0027 & 0.0741 & 0.1818 & 0.0742 & 0.6555 \\\hline
TADE      & 50  & 0.1919  & -0.0248 & 0.1394  & 1.2496 & 0.0097 & 0.3500 & 0.4003 & 0.1316 & 1.7919 \\
          & 100 & 0.1584  & -0.0187 & 0.0960  & 0.9912 & 0.0064 & 0.2615 & 0.3582 & 0.1065 & 1.5472 \\
          & 200 & 0.0574  & -0.0123 & 0.0141  & 0.7336 & 0.0044 & 0.1874 & 0.3068 & 0.0892 & 1.2797 \\
          & 500 & -0.1495 & -0.0125 & -0.1113 & 0.5162 & 0.0034 & 0.1272 & 0.2555 & 0.0801 & 1.0348 

\\\hline
\end{tabular}
}
\end{table}
From Tables \ref{sim:tab1}-\ref{sim:tab4}, we conclude the following inferences.
\begin{itemize}
    \item The results of MC simulation study show that the MSE,MRE, and bias values generally decrease as the sample size $n$ increases.
 \item It is seen that $\upsilon$ is the parameter with the smallest MSE value, while $\gamma$ is the parameter with the largest MSE value.
 
 \item In the estimation of the parameter $\gamma$, MLE is superior to the other six estimators according to the MSE criterion.

 \item In the estimation of the parameter $\upsilon$, MLE as well as MPSE performed well according to the MSE criterion at small sample sizes. We observe that MPSE sometimes has a smaller MSE than MLE at small sample sizes.

\item When the MSE values of the parameter $\upsilon$ are examined, it is determined that LSE and CvME are worthy competitors to MLE for large sample sizes. It should be noted that LSE and CvME have MSE values smaller than MLE in some parameter cases.

\item Similarly to the parameter $\gamma$, MLE is the best estimator according to the MSE criterion to estimate the parameter $p$.

\item Based on the findings of the simulation study, we finally recommend MLE, MPSE, LSE and CvME estimators to estimate the three parameters $\gamma,$ $\upsilon$ and $p$ of the RBTLL distribution among the analyzed seven estimators.
 
\end{itemize}

\section{Real data analysis}
\label{Sec: realdata}

We analyze two practical data sets to assess the superiority of the RBTLL distribution over its rivals such as log-logistic (LL), extended log-logistic (ELL) \cite{lima2017extended}, transmuted Weibull (TW) \cite{aryal2011transmuted}, Weibull (W) in modeling real-life data. For the comparison the fitted models, we consider some selection criteria as follows: Kolmogorov-Smirnov statistics (KS), Anderson-Darling statistics (AD), Cramer-von-Mises statistics (CvM), and their p-values.
\subsection{Reactor pump failure data}
In this subsection, we analyze real-world data provided by \cite{suprawhardana1999total}
on the time (thousands of hours) between failures of secondary reactor pumps. The reactor pump failure data are:
2.160, 0.746, 0.402, 0.954, 0.491, 6.560, 4.992, 0.347, 0.150, 0.358, 0.101, 1.359, 3.465,
1.060, 0.614, 1.921, 4.082, 0.199, 0.605, 0.273, 0.070, 0.062, 5.320.

Table \ref{tab:real1} shows the selection criteria as well as the MLEs and the corresponding standard errors (SEs) of the parameters for the fitted distributions. Figure \ref{fig:real1cdf} illustrates the fitted CDFs while Figure \ref{fig:real1pdf} shows the fitted PDFs for the reactor pump failure data. Figure \ref{Fig: nonplot1} illustrates the non-parametric plots for reactor pump failure data.

\begin{figure} [H]
    \centering
    \includegraphics[width=0.8\linewidth]{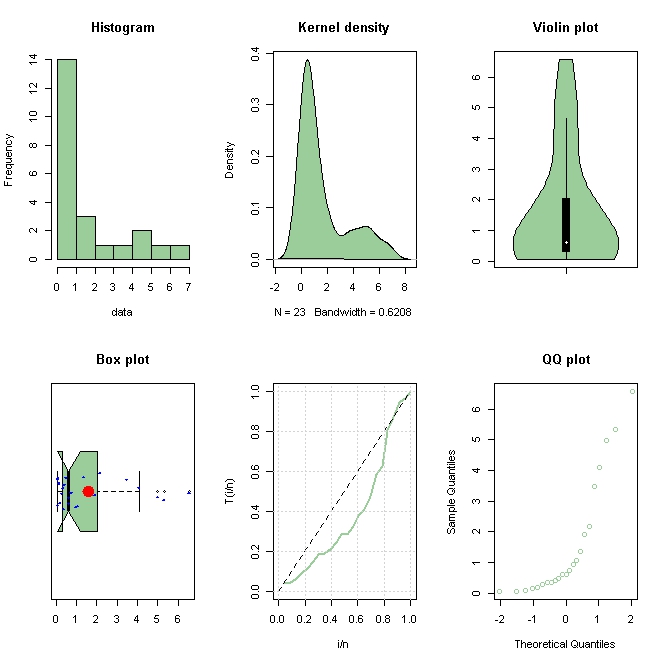}
    \caption{Non-parametric plots for reactor pump failure data}
    \label{Fig: nonplot1}
\end{figure}

\begin{table}[H]
\centering
\caption{The comparison statistics, MLEs and SEs of the parameters for the reactor pump failure data}%
\label{tab:real1}
\scalebox{0.6} {\ 
\begin{tabular}{cccccccccccccccc}
\hline
Model & $-2\log \ell$ & KS     & AD     & CvM    & p-value(KS) & p-value(AD) & p-value(CvM) & $\hat{\gamma}$    & $\hat{\upsilon}$  & $\hat{p}$       & $\hat{\lambda}$       & SE($\hat{\gamma}$) & SE($\hat{\upsilon}$) & SE($\hat{p}$) & SE($\hat{\lambda}$)  \\\hline
RBTLL & 65.2329 & 0.0910 & 0.2278 & 0.0245 & 0.9820      & 0.9809      & 0.9920       & 1.0253  & 1.3240  & 0.3483   & -                          & 1.7055    & 0.2457      & 0.9870   & -          \\
LL    & 65.2273 & 0.0933 & 0.2321 & 0.0259 & 0.9770      & 0.9790      & 0.9893       & 0.7067  & 1.2294  & -        & -                          & 0.2122    & 0.2096      & -        & -          \\
ELL   & 63.7915 & 0.0968 & 0.2284 & 0.0256 & 0.9680      & 0.9806      & 0.9898       & 15.7819 & 7.5304  & 102.2741 & 0.3607                     & 31.0218   & 19.1216     & 683.6523 & 0.2998     \\
TW    & 64.7601 & 0.1115 & 0.3718 & 0.0541 & 0.9075      & 0.8750      & 0.8561       & 0.8535  & 1.7451  & 0.3311   & -                          & 0.1430    & 0.8440      & 0.5849   & -          \\
W     & 65.0278 & 0.1184 & 0.4041 & 0.0619 & 0.8667      & 0.8432      & 0.8072       & 0.8077  & 1.3915  & -        & -                          & 0.1298    & 0.3805      & -        & -    \\\hline 
\end{tabular}
}
\end{table}

\begin{figure}[H]
\centerline{\includegraphics[width=4.96in,height=2.82in]{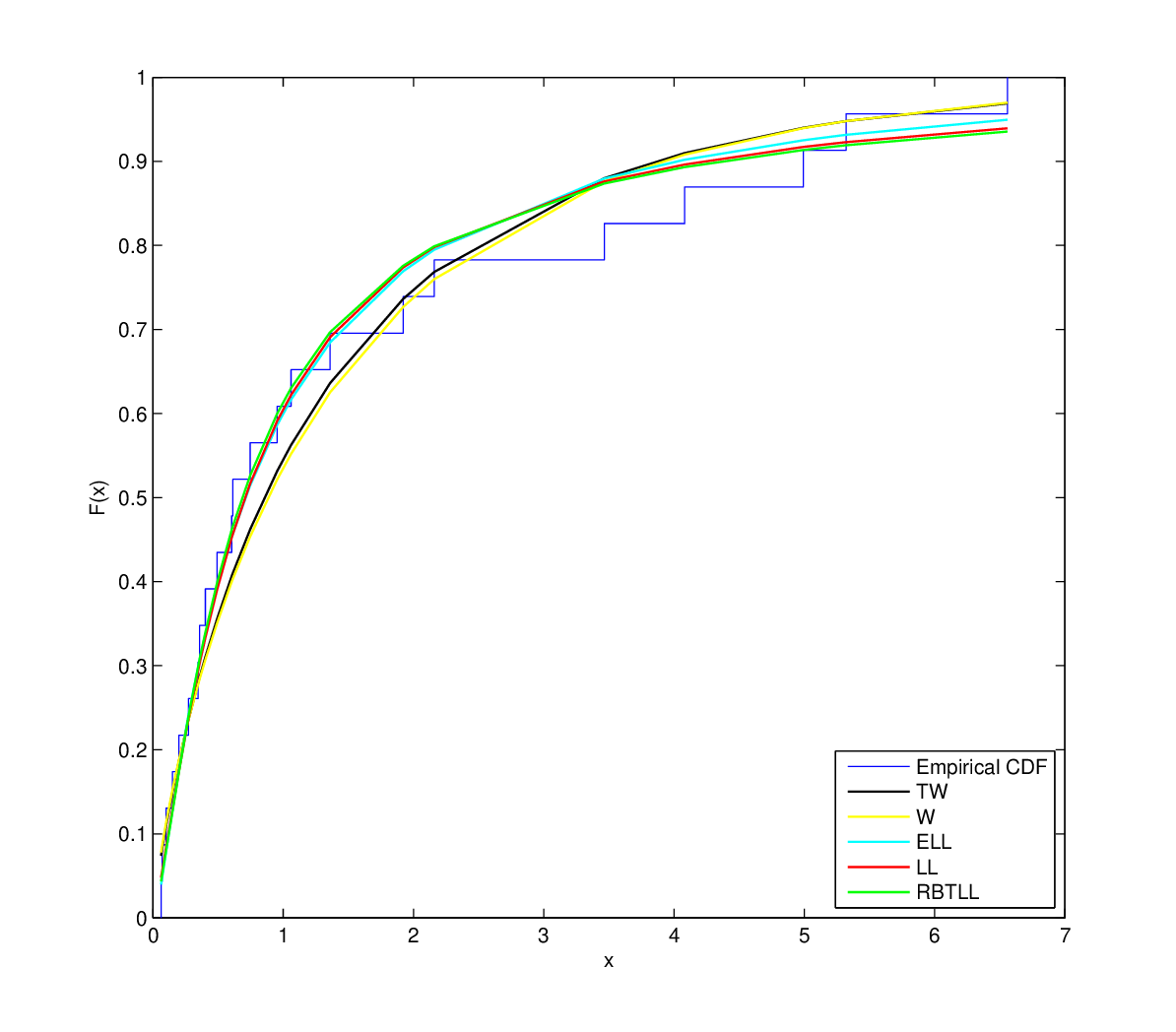}}
\caption{The fitted CDFs for the reactor pump failure data set}
\label{fig:real1cdf}
\end{figure}

\begin{figure}[H]
\centerline{\includegraphics[width=4.89in,height=2.78in]{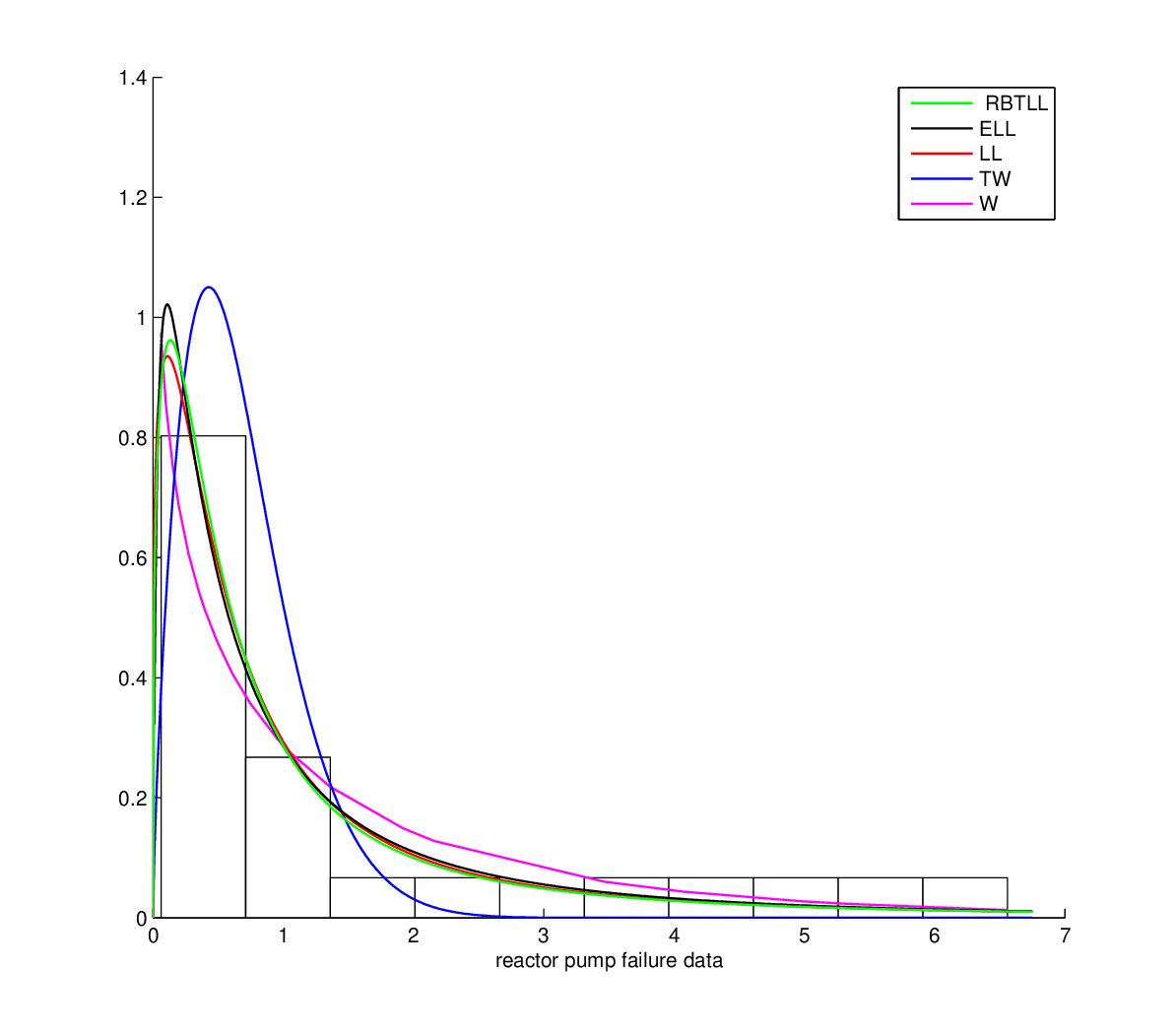}}
\caption{The fitted PDFs for the reactor pump failure data set}
\label{fig:real1pdf}
\end{figure}

\subsection{Petroleum rock data}
This subsection presents second practical data example to compare the fits of the RBTLL distribution and its rivals. For this reason, we consider the right skewed dataset discussed by \cite{moutinho2012beta} consists the 48 observations on petroleum rock samples. These data set relates to twelve core samples sampled from petroleum reservoirs in four cross-sections. Each core sample was measured for permeability and each cross-section has the some variables such as total pore area, total pore perimeter and shape. We consider the shape perimeter by squared (area) variable and the petroleum rock data are
0.0903296, 0.2036540, 0.2043140, 0.2808870, 0.1976530, 0.3286410, 0.1486220, 0.1623940, 0.2627270, 0.1794550, 0.3266350, 0.2300810, 0.1833120, 0.1509440, 0.2000710, 0.1918020, 0.1541920, 0.4641250, 0.1170630, 0.1481410, 0.1448100, 0.1330830, 0.2760160, 0.4204770, 0.1224170, 0.2285950, 0.1138520, 0.2252140, 0.1769690, 0.2007440, 0.1670450, 0.2316230, 0.2910290, 0.3412730, 0.4387120, 0.2626510, 0.1896510, 0.1725670, 0.2400770, 0.3116460, 0.1635860, 0.1824530, 0.1641270, 0.1534810, 0.1618650, 0.2760160, 0.2538320, 0.2004470.

Table \ref{tab:real2} shows the comparison statistics as weel as the MLEs and corresponding SEs of the parameters for the fitted distributions. Figure \ref{fig:real2cdf} illustrates the fitted CDFs while Figure \ref{fig:real2pdf} shows the fitted PDFs for petroleum rock data. Figure \ref{Fig: nonplot2} illustrates the non-parametric plots for petroleum rock data.

\begin{figure} [H]
    \centering
    \includegraphics[width=0.8\linewidth]{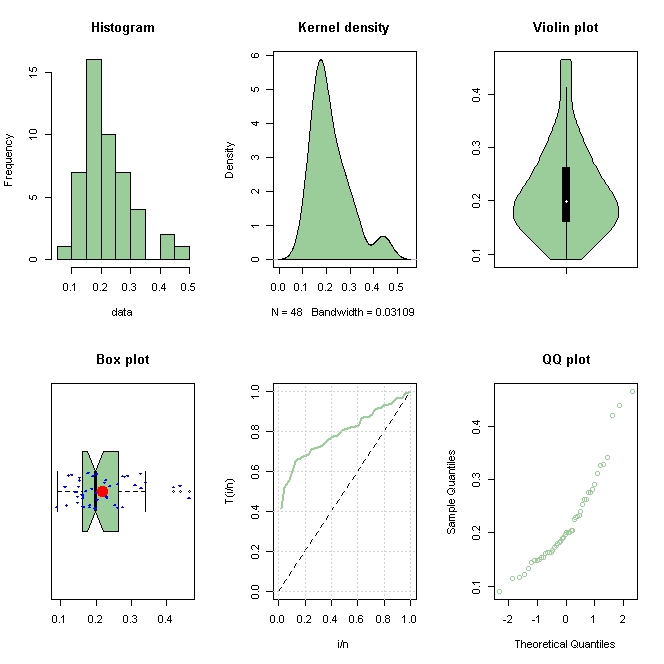}
    \caption{Non-parametric plots for petroleum rock data}
    \label{Fig: nonplot2}
\end{figure}

\begin{table}[H]
\centering
\caption{The comparison statistics, MLEs and SEs of the parameters for the petroleum rock data}%
\label{tab:real2}
\scalebox{0.6} {\ 
\begin{tabular}{cccccccccccccccc}
\hline
Model & $-2\log \ell$ & KS     & AD     & CvM    & p-value(KS) & p-value(AD) & p-value(CvM) & $\hat{\gamma}$    & $\hat{\upsilon}$  & $\hat{p}$       & $\hat{\lambda}$       & SE($\hat{\gamma}$) & SE($\hat{\upsilon}$) & SE($\hat{p}$) & SE($\hat{\lambda}$)  \\\hline
RBTLL & -116.1956 & 0.0672 & 0.1573 & 0.0229 & 0.9818      & 0.9980      & 0.9939       & 9.2012   & 4.7990  & 0.9681  & -                          & 1.2830    & 0.9381      & 0.2899   & -          \\
LL    & -114.9396 & 0.0877 & 0.2874 & 0.0437 & 0.8543      & 0.9470      & 0.9153       & 0.2016   & 4.9490  & -       & -                          & 0.0103    & 0.5932      & -        & -          \\
ELL   & -116.6510 & 0.0834 & 0.1872 & 0.0287 & 0.8919      & 0.9936      & 0.9811       & 115.9440 & 64.9723 & 15.9957 & 0.7352                     & 1891.7200 & 217.0566    & 407.1841 & 0.7056     \\
TW    & -107.8919 & 0.1407 & 0.9700 & 0.1480 & 0.2978      & 0.3729      & 0.3969       & 3.0076   & 0.2797  & 0.6464  & -                          & 0.3112    & 0.0214      & 0.2712   & -          \\
W     & -105.4834 & 0.1499 & 1.2293 & 0.1945 & 0.2310      & 0.2565      & 0.2789       & 2.7475   & 0.2452  & -       & -                          & 0.2844    & 0.0137      & -        & -       \\\hline 
\end{tabular}
}
\end{table}

\begin{figure}[H]
\centerline{\includegraphics[width=4.96in,height=2.82in]{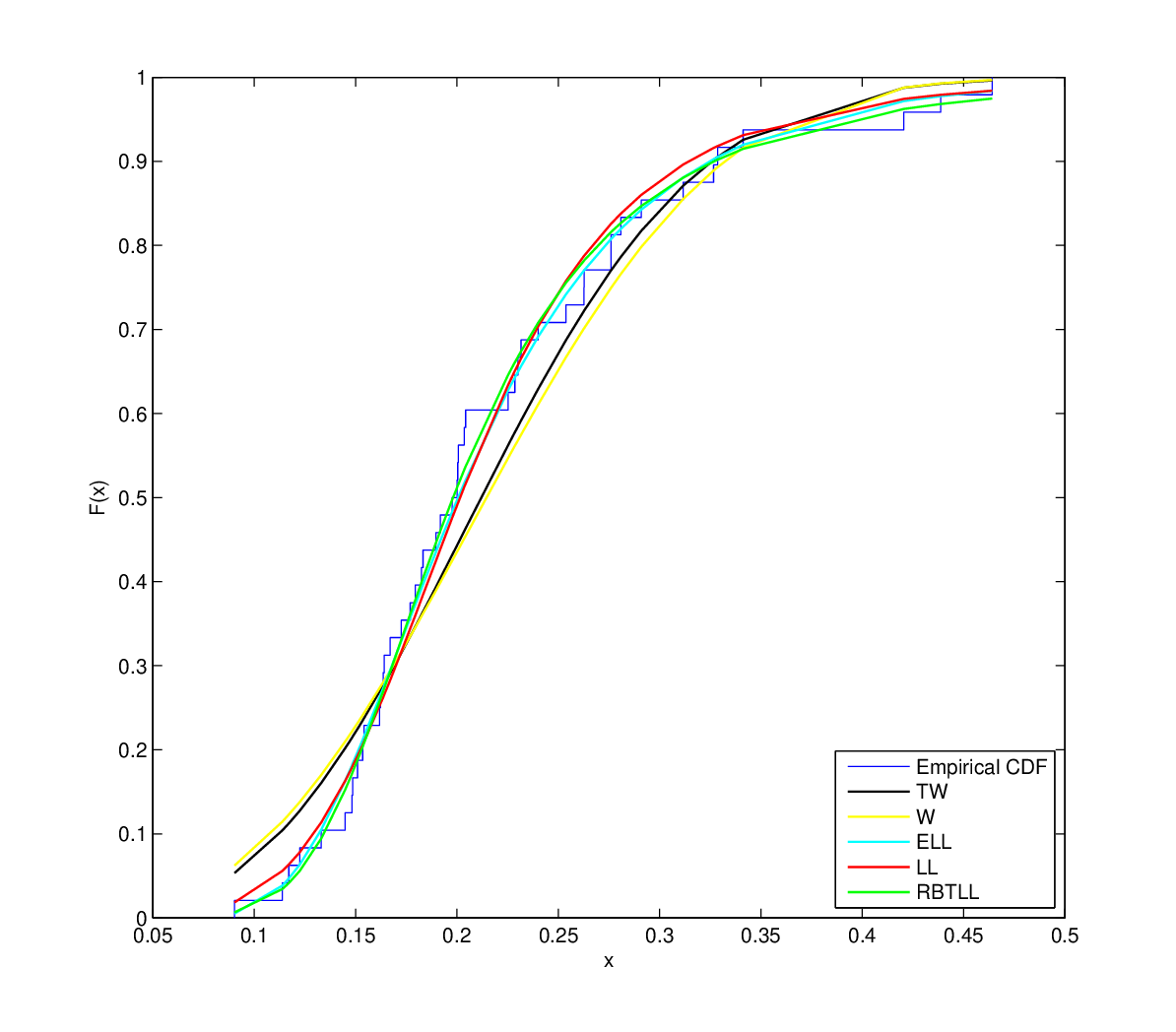}}
\caption{The fitted CDFs for the petroleum rock data set}
\label{fig:real2cdf}
\end{figure}

\begin{figure}[H]
\centerline{\includegraphics[width=4.89in,height=2.78in]{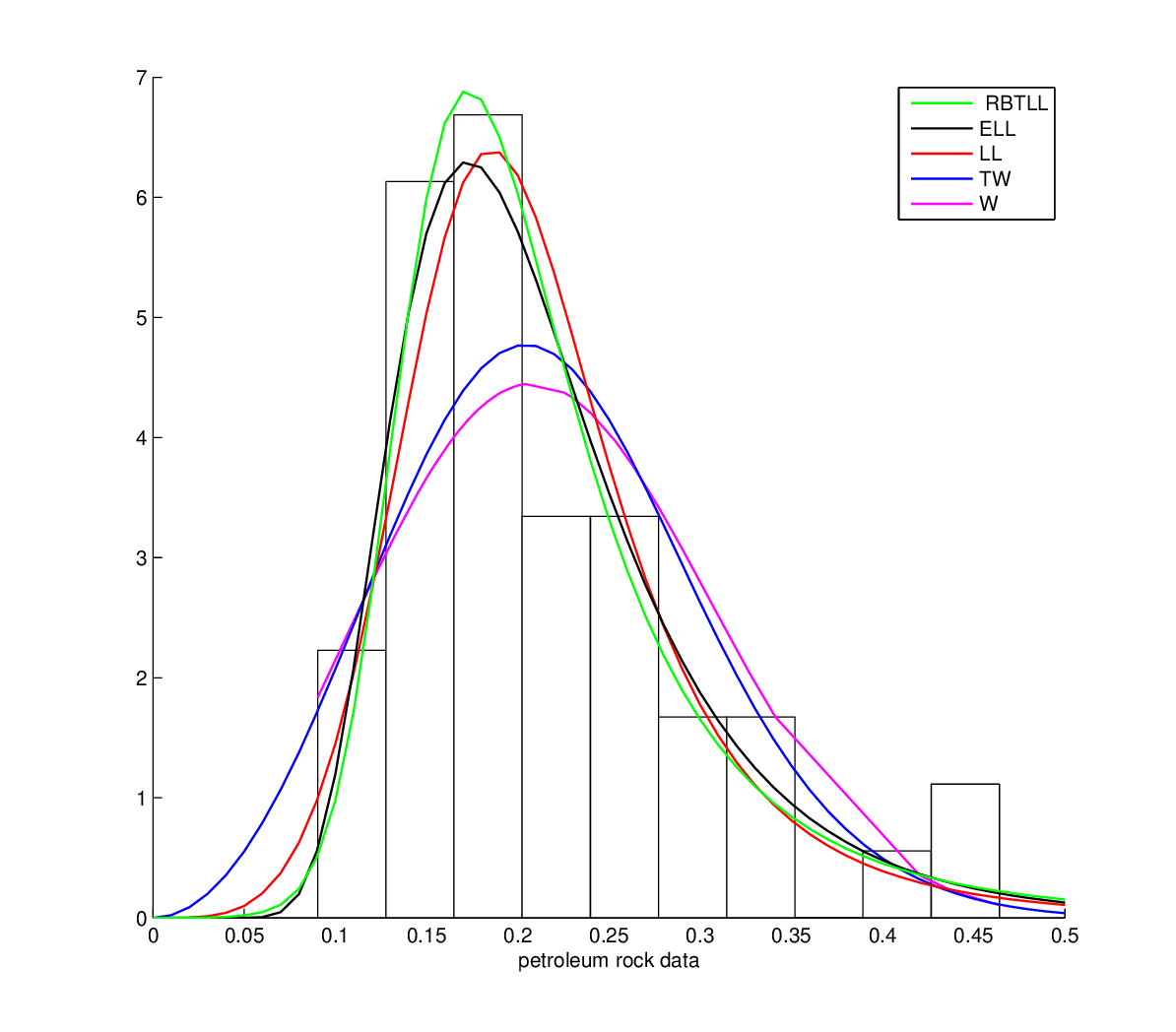}}
\caption{The fitted PDFs for the petroleum rock data set}
\label{fig:real2pdf}
\end{figure}

According to Tables \ref{tab:real1} and \ref{tab:real2}, we conclude that the RBTLL distribution fits the reactor pump failure and petroleum rock data better than competing distributions such as LL, ELL, W and TW according to all the comparison criteria examined.

\section{Conclusion Remarks}
This paper develops a new sub-model of record-based transmuted family of distributions. Some distributional properties of this submodel are examined, the parameter estimation problem is addressed with seven different methods, and two real data studies are presented to compare its data modeling capability with some competing distributions and to prove its usefulness in the analysis of real life data. The results of this study show that the proposed RBTLL distribution fits the two sample data better than the base distribution, the LL distribution, according to all comparison criteria. It should also be noted that the RBTLL distribution is not only capable of modeling the two real datasets given in this study, but these datasets are only presented as examples. In the future, the studies can be planned on modeling the RBTLL distribution for data sets in many fields such as physics, chemistry, biology, health sciences and social sciences and estimating its parameters with different estimation methods or extending this family of distributions by introducing new submodels to the record-based transmuted family of distributions.
\label{Sec: Conc}

\bigskip
\noindent\textbf{Funding} We have not any financial support in this paper.
\newline
\bigskip
\noindent\textbf{Data Availability} The datasets are open access are given in reference list.\\
\noindent\textbf{Conflict of interest} Not Available


\bibliographystyle{apalike} 
\bibliography{ref}

\begin{thebibliography}{}

\bibitem[Adesegun et~al., 2023]{adesegun2023transmuted}
Adesegun, O.~I., Abayomi, D.~G., Taiwo, S.~A., Ademola, O.~I., and Omotola,
  A.~B. (2023).
\newblock Transmuted logistic-exponential distribution for modelling lifetime
  data.
\newblock {\em Thailand Statistician}, 21(4):943--960.

\bibitem[Alrweili, 2025]{alrweili2025statistical}
Alrweili, H. (2025).
\newblock Statistical inference and data analysis of the record-based
  transmuted burr x model.
\newblock {\em Open Mathematics}, 23(1):20240121.

\bibitem[Anderson and Darling, 1952]{anderson1952asymptotic}
Anderson, T.~W. and Darling, D.~A. (1952).
\newblock Asymptotic theory of certain" goodness of fit" criteria based on
  stochastic processes.
\newblock {\em The annals of mathematical statistics}, pages 193--212.

\bibitem[Arnold et~al., 2008]{arnold2008first}
Arnold, B.~C., Balakrishnan, N., and Nagaraja, H.~N. (2008).
\newblock {\em A first course in order statistics}.
\newblock SIAM.

\bibitem[Arshad et~al., 2024]{arshad2024record}
Arshad, M., Khetan, M., Kumar, V., and Pathak, A.~K. (2024).
\newblock Record-based transmuted generalized linear exponential distribution
  with increasing, decreasing and bathtub shaped failure rates.
\newblock {\em Communications in Statistics-Simulation and Computation},
  53(7):3489--3513.

\bibitem[Aryal, 2013]{aryal2013transmuted}
Aryal, G.~R. (2013).
\newblock Transmuted log-logistic distribution.
\newblock {\em Journal of Statistics Applications \& Probability}, 2(1):11.

\bibitem[Aryal and Tsokos, 2011]{aryal2011transmuted}
Aryal, G.~R. and Tsokos, C.~P. (2011).
\newblock Transmuted weibull distribution: A generalization of theweibull
  probability distribution.
\newblock {\em European Journal of pure and applied mathematics}, 4(2):89--102.

\bibitem[Balakrishnan and He, 2021]{balakrishnan2021record}
Balakrishnan, N. and He, M. (2021).
\newblock A record-based transmuted family of distributions.
\newblock {\em Advances in Statistics-Theory and Applications: Honoring the
  Contributions of Barry C. Arnold in Statistical Science}, pages 3--24.

\bibitem[Cheng and Amin, 1983]{cheng1983estimating}
Cheng, R. and Amin, N. (1983).
\newblock Estimating parameters in continuous univariate distributions with a
  shifted origin.
\newblock {\em Journal of the Royal Statistical Society: Series B
  (Methodological)}, 45(3):394--403.

\bibitem[Choi and Bulgren, 1968]{choi1968estimation}
Choi, K. and Bulgren, W.~G. (1968).
\newblock An estimation procedure for mixtures of distributions.
\newblock {\em Journal of the Royal Statistical Society: Series B
  (Methodological)}, 30(3):444--460.

\bibitem[Glaser, 1980]{glaser1980bathtub}
Glaser, R.~E. (1980).
\newblock Bathtub and related failure rate characterizations.
\newblock {\em Journal of the American Statistical Association},
  75(371):667--672.

\bibitem[Granzotto and Louzada, 2015]{granzotto2015transmuted}
Granzotto, D. C.~T. and Louzada, F. (2015).
\newblock The transmuted log-logistic distribution: modeling, inference, and an
  application to a polled tabapua race time up to first calving data.
\newblock {\em Communications in Statistics-Theory and Methods},
  44(16):3387--3402.

\bibitem[Lima and Cordeiro, 2017]{lima2017extended}
Lima, S.~R. and Cordeiro, G.~M. (2017).
\newblock The extended log-logistic distribution: properties and application.
\newblock {\em Anais da Academia Brasileira de Ci{\^e}ncias}, 89(01):03--17.

\bibitem[Louzada and Granzotto, 2016]{louzada2016transmuted}
Louzada, F. and Granzotto, D.~C. (2016).
\newblock The transmuted log-logistic regression model: a new model for time up
  to first calving of cows.
\newblock {\em Statistical Papers}, 57:623--640.

\bibitem[Moutinho~Cordeiro and dos Santos~Brito, 2012]{moutinho2012beta}
Moutinho~Cordeiro, G. and dos Santos~Brito, R. (2012).
\newblock The beta power distribution.
\newblock {\em Brazilian Journal of Probability and Statistics}, 26(1):88--112.

\bibitem[Pathak et~al., 2024]{pathak2024record}
Pathak, A.~K., Arshad, M., Pandey, A.~K., and Ali, A. (2024).
\newblock Record-based transmuted unit-omega distribution: different methods of
  estimation and applications: Accepted-august 2024.
\newblock {\em REVSTAT-Statistical Journal}.

\bibitem[Ranneby, 1984]{ranneby1984maximum}
Ranneby, B. (1984).
\newblock The maximum spacing method. an estimation method related to the
  maximum likelihood method.
\newblock {\em Scandinavian Journal of Statistics}, pages 93--112.

\bibitem[Sakthivel and Nandhini, 2022]{sakthivel2022record}
Sakthivel, K. and Nandhini, V. (2022).
\newblock Record-based transmuted power lomax distribution: Properties and its
  applications in reliability.
\newblock {\em Reliability: Theory \& Applications}, 17(4 (71)):574--592.

\bibitem[Samuel and Kehinde, 2019]{samuel2019study}
Samuel, A.~F. and Kehinde, O.~A. (2019).
\newblock A study on transmuted half logistic distribution: Properties and
  application.
\newblock {\em International Journal of Statistical Distributions and
  Applications}, 5(3):54.

\bibitem[Shaked and Shanthikumar, 1994]{shaked1994stochastic}
Shaked, M. and Shanthikumar, J.~G. (1994).
\newblock Stochastic orders and their applications.
\newblock {\em (No Title)}.

\bibitem[Suprawhardana and Prayoto, 1999]{suprawhardana1999total}
Suprawhardana, M.~S. and Prayoto, S. (1999).
\newblock Total time on test plot analysis for mechanical components of the
  rsg-gas reactor.
\newblock {\em Atom Indones}, 25(2):81--90.

\bibitem[Swain et~al., 1988]{swain1988least}
Swain, J.~J., Venkatraman, S., and Wilson, J.~R. (1988).
\newblock Least-squares estimation of distribution functions in johnson's
  translation system.
\newblock {\em Journal of Statistical Computation and Simulation},
  29(4):271--297.

\bibitem[Tan{\i}{\c{s}}, 2021]{tanis2021transmuted}
Tan{\i}{\c{s}}, C. (2021).
\newblock Transmuted lower record type power function distribution.
\newblock {\em Journal of Science and Arts}, 21(4):951--960.

\bibitem[Tan{\i}{\c{s}}, 2022]{tanis2022record}
Tan{\i}{\c{s}}, C. (2022).
\newblock Transmuted lower record type inverse rayleigh distribution:
  estimation, characterizations and applications.
\newblock {\em Ricerche di Matematica}, 71(2):777--802.

\bibitem[Tan{\i}{\c{s}}, 2024]{tanics2024new}
Tan{\i}{\c{s}}, C. (2024).
\newblock A new lindley distribution: applications to covid-19 patients data.
\newblock {\em Soft Computing}, 28(4):2863--2874.

\bibitem[Tan{\i}{\c{s}} and Sara{\c{c}}o{\u{g}}lu, 2022]{tanics2022record}
Tan{\i}{\c{s}}, C. and Sara{\c{c}}o{\u{g}}lu, B. (2022).
\newblock On the record-based transmuted model of balakrishnan and he based on
  weibull distribution.
\newblock {\em Communications in Statistics-Simulation and Computation},
  51(8):4204--4224.

\end{thebibliography}
\end{document}